\def\BibTeX{{\rm B\kern-.05em{\sc i\kern-.025em b}\kern-.08em
    T\kern-.1667em\lower.7ex\hbox{E}\kern-.125emX}}
\newcommand{\x}{\textnormal{\textbf{x}}}
\renewcommand{\v}{\textnormal{\textbf{v}}}
\newcommand{\p}{\textnormal{\textbf{p}}}
\newcommand{\voi}{\textnormal{\textbf{v}}_i}
\newcommand{\poi}{\textnormal{\textbf{p}}_i}
\newcommand{\ri}{\textnormal{\textbf{r}}_i}
\renewcommand{\u}{\textnormal{\textbf{u}}}
\newcommand{\K}{\mathcal{K}}
\newcommand{\Set}{\mathcal{S}}
\DeclareMathOperator{\Int}{Int}
\DeclareMathOperator{\atan}{atan2}
\newcommand{\vri}{\textnormal{\textbf{v}}_{\mathrm{r},i}}
\newtheoremstyle{idefinition}
{}                % Space above
{}                % Space below
{}        % Theorem body font % (default is "\upshape")
{}                % Indent amount
{\bfseries\itshape}       % Theorem head font % (default is \mdseries)
{.}               % Punctuation after theorem head % default: no punctuation
{ }               % Space after theorem head
{}                % Theorem head spec
\def\thmhead@plain#1#2#3{%
  \thmname{#1}\thmnumber{\@ifnotempty{#1}{ }\@upn{#2}}%
  \thmnote{ {\the\thm@notefont(\textit{#3})}}}
\let\thmhead\thmhead@plain
\theoremstyle{definition}
\newtheorem{remark}{Remark}
\newtheorem{assumption}{Assumption}
\theoremstyle{idefinition}
\newtheorem{defi}{Definition}
\theoremstyle{theorem}
\newtheorem{lemma}{Lemma}
\newtheorem{theorem}{Theorem}
\begin{document}

\title{Safety-Critical Control of Nonholonomic Vehicles in Dynamic Environments using Velocity Obstacles
}

\author{Aurora Haraldsen$^{1}$, Martin S. Wiig$^{2}$, Aaron D. Ames$^{3}$, Kristin Y. Pettersen$^{1,2}$% <-this % stops a space
\thanks{*This work was supported by the Research Council of Norway through project No. 302435 and the Centres of Excellence funding scheme, project No. 223254 – NTNU AMOS.
This research was supported in part by the National Science Foundation (CPS Award \#1932091) and Nodein Inc.}% <-this % stops a space
\thanks{$^{1}$Aurora Haraldsen and Kristin Y. Pettersen are with the Department of Engineering Cybernetics, Norwegian University of Science and Technology, NO-7491 Trondheim, Norway.  {\{\tt\small Aurora.Haraldsen, Kristin.Y.Pettersen\}@ntnu.no}}%
\thanks{$^{2}$Martin S. Wiig and  Kristin Y. Pettersen are with the Norwegian Defence Research Establishment, P.O. Box 25, N-2027 Kjeller, Norway.
        {\tt\small Martin-Syre.Wiig@ffi.no}}%
\thanks{$^{3}$Aaron D. Ames is with the Department of Mechanical
and Civil Engineering, California Institute of Technology, Pasadena,
CA 91125, USA. {\tt\small ames@caltech.edu}}
        }

\maketitle

\begin{abstract}
This paper considers collision avoidance for vehicles with first-order nonholonomic constraints maintaining nonzero forward speeds,  moving within dynamic environments. We leverage the concept of control barrier functions (CBFs) to synthesize control inputs that prioritize safety, where the safety criteria are derived from the \emph{velocity obstacle principle}. Existing instantiations of CBFs for collision avoidance, e.g., based on maintaining a minimal distance, can result in control inputs that make the vehicle stop or even reverse. The proposed formulation effectively separates speed control from steering, allowing the vehicle to maintain a forward motion without compromising safety. This is beneficial for ensuring that the vehicle advances towards its desired destination, and it is moreover an underlying requirement for certain vehicles such as marine vessels and fixed-wing UAVs. Theoretical safety guarantees are provided, and numerical simulations demonstrate the efficiency of the strategy in environments containing moving obstacles. 
\end{abstract}

\section{Introduction}

With the increasing use of autonomous vehicles for commercial and scientific purposes, ensuring safety formally and in practice has become ever more pressing.  Within robot navigation, an important aspect of safety involves avoiding collision with obstacles that appear in the robot's surroundings. In dynamic environments, i.e., containing moving obstacles, the collision avoidance system must accommodate an unpredictable information picture providing only a limited time to react to a collision. As a result,  the effectiveness of planning-based algorithms is reduced significantly, highlighting the need for developing real-time reactive methods that ensure safety. 

Safety can in many applications be framed as the property of always remaining in a safe set, i.e., forward set invariance.  Control barrier functions~(CBFs)~\cite{Ames17} present a means to ensure safety of dynamical systems, by providing a framework to \textit{enforce} the forward invariance property. The controller synthesis is usually framed as an optimization problem in which the deviation between the nominal system input and safe inputs is minimized~\cite{Ames2019}, allowing for reactive adjustments of the vehicle's control inputs in a minimally intrusive manner.  Collision avoidance has been studied within the framework of CBFs, where most previous work consider static environments and multi-agent systems~\cite{Wang17,Ames2019,Glotfelter19, BORRMANN201568, Basso2020, Marley2021}. For double integrator systems, the approach~\cite{Wang17} can be applied to avoid a dynamic obstacle by viewing it as an agent with zero control inputs. Safety can thus only be concluded if the obstacle moves at a constant velocity. The same constant velocity assumption is made in \cite{Thyri2020}, where CBFs are applied for the control of a fully actuated ASV. Importantly, these are both holonomic systems where CBFs have found great success.

This study specifically considers vehicles with unicycle-kinematics, constituting a first-order nonholonomic constraint. Since the rotation rate does not appear through differentiation of the position, the result is limited control authority if the barrier function is defined solely from the position of the vehicle. To bypass this limitation, collision avoidance may be assured for a point in front the vehicle~\cite{robotarium,Glotfelter19}. However, this approach has not been studied for dynamic environments in which the perturbation of the vehicle center could lead to unwanted behaviour. Noticing that the control inputs appear through a second differentiation of the position, the approach taken by \cite{Molnar23} uses a high-order barrier function~\cite{Xiao22} for controlling a segway to avoid collision. Although a dynamic obstacle is considered, the segway is restricted to move on a straight line, thus making it a holonomic system.

In the context of applying CBFs to UAVs and UGVs, \cite{Tayal2023} and \cite{Thontepu2023} formulate a barrier function based on the collision cone concept. A collision cone refers to the set of  relative velocities between the vehicle and an obstacle resulting in a future collision~\cite{collisioncones,Chakravarthy1998,FioriniShiller1998}, thus offering a predictive notion of safety that can be exploited to keep the vehicle from colliding with moving obstacles. The main limitation of this approach is, however, that the barrier function is not defined when the relative velocity vector is zero.  The equivalent velocity obstacle (VO) is perhaps more suited for control design, as it relates an obstacle to the absolute velocities of the vehicle~\cite{FioriniShiller1998}. The concept has been applied to nonholonomic systems, but never in the context of CBFs. Moreover, \cite{Wilkie09} and \cite{VandenBerg2011} adapts the VO to vehicles with car-like kinematics, however, the control inputs are only guaranteed to be safe so long as the obstacles continue along their current paths.  VO-based strategies have been proposed for marine surface vessels~\cite{Kuwata2014, Huang2018, Cho2019}, underwater vehicles~\cite{Zhang2017}, and UAVs~\cite{Jenie2016},
but safety guarantees are not provided.  While the algorithm~\cite{Haraldsen2020} is shown to guarantee safety for unicycle-type vehicles under explicit conditions, the speed of the vehicle is assumed to be constant.

Inspired by the concept of velocity cones, we propose a novel strategy using CBFs for nonholonomic vehicles where it is beneficial to maintain a positive speed while avoiding obstacles. The basis of the safety condition is drawn from the velocity obstacle cone rather than the collision cone, thus circumventing the issues related to the latter formulation. As opposed to most other collision avoidance strategies formed on CBFs, we decouple the control of the speed from the steering.  An inherent advantage of this formulation is that the avoidance strategy is less invasive with respect to the nominal vehicle speed, which makes it better suited for vehicles with speed requirements, such as underactuated marine vehicles and fixed-wing UAVs. While more conventional CBFs result in vanishing control authority when the vehicle must maintain a nonzero forward speed~\cite{Marley2021},  the proposed strategy preserves the relative degree at all times, thus maintaining safety of the vehicle in any configuration. Safety guarantees are analytically derived, and  the resulting performance is demonstrated in simulations with multiple dynamic obstacles.

The paper is organized as follows. Section~\ref{sec:MathFramework} provides a brief introduction to CBFs. Section~\ref{sec:Problem}  details the problem and presents the velocity obstacle as a means to encode collisions between moving objects in the velocity space. The concept is leveraged in Section~\ref{sec:Controller} to formulate CBFs that modify the inputs of the vehicle to preserve safety, where theoretical guarantees of safety are provided. Simulations are presented in Section~\ref{sec:Simulations}, and concluding remarks are given in Section~\ref{sec:Conclusion}.

\section{Mathematical Framework}\label{sec:MathFramework}
The control strategy proposed in this paper builds upon the notion of barrier functions. This section establishes the notation and theoretical foundation of this framework. Moreover, it briefly discusses controller synthesis using barrier functions. For the intuition behind, and the mathematical proofs of the theory presented here, the reader is referred to \cite{Ames17,XU15}.

\begin{defi}[Extended class-$\K$ function]
A continuous function $\alpha : \left(-c,d\right)\rightarrow \left(-c,\infty\right)$ for $c,d > 0$ is said to belong to extended class-$\mathcal{K}$, denoted as $\alpha \in \K_e$, if it is strictly increasing and $\alpha(0)=0$.
\end{defi}
We consider a nonlinear system
\begin{equation}\label{eq:nonlinear_eq}
    \Dot{\x} = f(\x),
\end{equation}
for $\x \in \mathcal{D}$, where $\mathcal{D} \subseteq \mathbb{R}^n$ is open and connected, and the mapping $f : \mathcal{D} \rightarrow \mathbb{R}^n$ is locally Lipschitz continuous on $\mathcal{D}$. Given an initial condition $\x(t_0) \in \mathcal{D}$, let $I(\x(t_0)) = [t_0, \tau_{\max})$ denote the maximum interval of existence for the unique solution $\x(t)$ of \eqref{eq:nonlinear_eq}. 
\begin{defi}
    A set $\Set\subseteq \mathcal{D}$ is forward invariant w.r.t.~\eqref{eq:nonlinear_eq} if 
    \begin{equation}
        \x(t_0) \in \Set \implies \x(t) \in \Set,\qquad \forall t \in I(\x(t_0)).
    \end{equation}
\end{defi} 
\begin{defi}\label{def:S}
    A set $\Set\subseteq \mathcal{D}$ is a 0-superlevel set of a continuously differentiable function $h: \mathcal{D} \rightarrow \mathbb{R}$ when
    \begin{subequations}\label{eq:S}
      \begin{align}
          \Set &= \{\x\in \mathcal{D} : h(\x) \geq 0\},\\
          \partial \Set &= \{\x\in \mathcal{D} : h(\x) = 0\},\\
          \Int(\Set) &= \{\x\in \mathcal{D} : h(\x) > 0\}.
      \end{align}
    \end{subequations}
\end{defi} 
Safety can in many contexts be defined as the property of evolving inside a $0$-superlevel set.  Accordingly, the control objective reduces to ensuring forward invariance of the set. A convenient condition for this property is given as follows.
\begin{lemma}\label{lem:invariance}
Given a set $\Set$ as in~\eqref{eq:S}, if
\begin{equation}\label{eq:hdot_geq_alpha}
    \dot{h}(\x) \geq \alpha(h(\x)), \qquad \forall \x \in \mathcal{D}, \quad \alpha \in \K_e,  
\end{equation}
then $\mathcal{S}$ is forward invariant w.r.t. \eqref{eq:nonlinear_eq}.
\end{lemma}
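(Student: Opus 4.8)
The plan is to reduce the claim to a scalar differential inequality along solutions and then close the argument with a comparison principle. Fix an arbitrary initial condition $\x(t_0)\in\Set$, so that $h(\x(t_0))\ge 0$, and let $\x(t)$ be the corresponding unique solution of \eqref{eq:nonlinear_eq} on $I(\x(t_0))$. Since $f$ is locally Lipschitz, $\x(\cdot)$ is $C^1$, and as $h$ is continuously differentiable the composite scalar function $y(t):=h(\x(t))$ is $C^1$ with $\dot y(t)=\dot h(\x(t))$ by the chain rule. Hypothesis \eqref{eq:hdot_geq_alpha} then yields the autonomous scalar differential inequality $\dot y(t)\ge\alpha(y(t))$ with $y(t_0)\ge 0$. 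It therefore suffices to show that any such $y$ stays nonnegative, because $y(t)\ge 0$ is exactly the statement $\x(t)\in\Set$.

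Next I would introduce the comparison system $\dot z=\alpha(z)$ with $z(t_0)=y(t_0)\ge 0$ and establish two facts. First, the nonnegative half-line is forward invariant for this scalar system: since $\alpha(0)=0$, the constant $z\equiv 0$ is an equilibrium, and strict monotonicity of $\alpha$ gives $\alpha(s)>0$ for $s>0$, so a solution starting at $z(t_0)\ge 0$ cannot descend through the origin and hence $z(t)\ge 0$ on its interval of existence. Second, by the comparison lemma (see, e.g., \cite{Ames17}), the supersolution $y$ dominates the solution of the comparison equation sharing its initial value, i.e.\ $y(t)\ge z(t)$ for all $t$ in the common interval. Chaining these bounds gives $h(\x(t))=y(t)\ge z(t)\ge 0$ for all $t\in I(\x(t_0))$, and since the initial condition in $\Set$ was arbitrary, $\Set$ is forward invariant with respect to \eqref{eq:nonlinear_eq}.

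The delicate point, and the step I expect to be the main obstacle, is the regularity of $\alpha$. As defined, an extended class-$\K$ function is only required to be continuous, so the comparison equation $\dot z=\alpha(z)$ need not have unique solutions, and both the comparison lemma and the nonnegativity of $z$ must then be handled through the appropriate extremal solution together with $\alpha(0)=0$ and strict monotonicity. Under the local Lipschitz assumption on $\alpha$ that is standard in the control barrier function literature, uniqueness holds, $z\equiv 0$ is the unique solution through the origin, no solution can cross it, and both facts become immediate. A self-contained alternative is a direct contradiction argument: supposing $y$ first reaches zero at some $\tau$ and becomes negative thereafter, one integrates the inequality $\dot y\ge\alpha(y)$ on the interval where $y<0$ to obtain a contradiction; this again hinges on controlling $\alpha$ near the origin, which is where all the analytic care concentrates.
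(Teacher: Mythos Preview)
The paper does not supply its own proof of Lemma~\ref{lem:invariance}; the result is stated and the reader is referred to~\cite{Ames17,XU15} for the underlying arguments. Your approach---reducing to the scalar differential inequality $\dot y\ge\alpha(y)$ along trajectories and invoking a comparison principle against $\dot z=\alpha(z)$---is precisely the standard argument in that literature, so there is nothing to contrast: you are reproducing what the cited references do.

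On the substance, your outline is sound and you have correctly isolated the only genuine subtlety. With $\alpha$ merely continuous, the comparison equation $\dot z=\alpha(z)$ need not have unique solutions, so the comparison lemma must be stated in terms of the minimal (or maximal, depending on the direction of the inequality) solution, and the fact that $z\equiv 0$ is a solution together with strict monotonicity of $\alpha$ is what blocks trajectories from crossing zero. Under the local Lipschitz assumption on $\alpha$ that is standard in the CBF setting, uniqueness holds and both steps are immediate, exactly as you note. Either the extremal-solution route or the direct contradiction argument you sketch at the end would complete the proof.
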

 A function $h$ satisfying \eqref{eq:hdot_geq_alpha} is commonly referred to as a \textit{barrier function} for \eqref{eq:nonlinear_eq}.

Consider now~\eqref{eq:nonlinear_eq} in control affine form
\begin{equation}\label{eq:nonlinear_eq_u}
    \Dot{\x} = f(\x) + g(\x)\u,\\
\end{equation}
with $\x \in \mathcal{D}$ and $f$ defined as before, $g : \mathcal{D} \rightarrow \mathbb{R}^{n\times m}$ is locally Lipschitz continuous on $\mathcal{D}$, and $\u \in \mathcal{U} \subseteq \mathbb{R}^m$ is Lipschitz continuous, where $\mathcal{U}$ is the set of admissible inputs. We desire a control input $\u$ that forces the system to evolve inside some safe set $\mathcal{S}$. This motivates the introduction of \textit{control} barrier functions (CBFs): 
\begin{defi}\label{def:CBF}
    Given a set $\Set$ as in~\eqref{eq:S},  $h$ is a CBF for~\eqref{eq:nonlinear_eq_u} if there exists an extended class-$\K$ function $\alpha$ such that
    \begin{equation}
       \sup_{\u\in \mathcal{U}} \,\underbrace{\nabla  h(\x)( f(\x) + g(\x) \u)}_{\dot{h}(\x,\u)}+ \alpha(h(\x)) \geq 0, \qquad \forall \x \in \mathcal{S}.
    \end{equation}
\end{defi}

\begin{theorem}\label{thm:CBF}
If $h$ is a CBF for \eqref{eq:nonlinear_eq_u} associated with a set $\mathcal{S}$~\eqref{eq:S}, then any Lipschitz continuous controller $ \u = k(\x)$ satisfying
\begin{equation}\label{eq:safe_constraint}
  \nabla  h(\x)( f(\x) + g(\x)k(\x))+ \alpha(h(\x)) \geq 0, \qquad \forall \x \in \mathcal{S},
\end{equation}
renders $\mathcal{S}$ forward invariant, i.e., it ensures $\x(t_0) \in \mathcal{S} \implies \x(t) \in \mathcal{S},\,\, \forall t \in I(\x(t_0))$.
\end{theorem}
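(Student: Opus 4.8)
The plan is to reduce Theorem~\ref{thm:CBF} to the forward-invariance criterion of Lemma~\ref{lem:invariance} by closing the loop with the given controller. First I would insert $\u = k(\x)$ into~\eqref{eq:nonlinear_eq_u} to obtain the autonomous system $\dot{\x} = f(\x) + g(\x)k(\x) =: f_{\mathrm{cl}}(\x)$, which has exactly the form of~\eqref{eq:nonlinear_eq}. Because $f$ and $g$ are locally Lipschitz on $\mathcal{D}$ and $k$ is Lipschitz continuous, the composite field $f_{\mathrm{cl}}$ is locally Lipschitz, so the standing existence-uniqueness setup of Section~\ref{sec:MathFramework} carries over and the unique solution $\x(t)$ is defined on $I(\x(t_0))$.

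Next I would differentiate $h$ along this closed loop: by the chain rule $\dot{h}(\x) = \nabla h(\x)\bigl(f(\x) + g(\x)k(\x)\bigr)$, which is precisely the expression bounded in~\eqref{eq:safe_constraint}. The safety constraint therefore reads $\dot{h}(\x) \geq -\alpha(h(\x))$ for all $\x \in \mathcal{S}$, i.e., the closed-loop trajectory obeys the barrier inequality needed to invoke Lemma~\ref{lem:invariance} for the field $f_{\mathrm{cl}}$. Applying the lemma then delivers forward invariance of $\mathcal{S}$, which is exactly the conclusion $\x(t_0) \in \mathcal{S} \implies \x(t) \in \mathcal{S}$ for all $t \in I(\x(t_0))$.

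The delicate step, where I would spend the most effort, is the passage from this pointwise bound to genuine set invariance. Lemma~\ref{lem:invariance} as stated asks for the barrier inequality throughout $\mathcal{D}$, whereas~\eqref{eq:safe_constraint} supplies it only on $\mathcal{S}$, so the argument must be run as a bootstrap rather than a verbatim citation. As long as $\x(t)$ remains in $\mathcal{S}$ the bound $\dot{h} \geq -\alpha(h)$ holds along the solution, and comparing $h(\x(t))$ against the scalar flow $\dot{y} = -\alpha(y)$ with $y(t_0) = h(\x(t_0)) \geq 0$ keeps $h(\x(t)) \geq y(t) \geq 0$; since $\alpha \in \K_e$ makes $y \equiv 0$ an equilibrium that a nonnegative initial value cannot cross downward, $h$ stays nonnegative and the trajectory never leaves $\mathcal{S}$. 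This last implication retroactively justifies having invoked the inequality along the whole solution, and I would state it carefully so that the restriction of~\eqref{eq:safe_constraint} to $\mathcal{S}$ is seen to be exactly sufficient.
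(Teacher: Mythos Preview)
The paper does not supply its own proof of Theorem~\ref{thm:CBF}; it merely states the result and, at the opening of Section~\ref{sec:MathFramework}, refers the reader to~\cite{Ames17,XU15} for the mathematical proofs of the entire framework. There is therefore nothing in the paper to compare your argument against.

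That said, your route is correct and is essentially the standard argument found in those references: close the loop with $\u = k(\x)$ to obtain an autonomous, locally Lipschitz vector field, differentiate $h$ along its solutions, and use the resulting barrier inequality together with a comparison against the scalar flow $\dot y = -\alpha(y)$ to conclude that $h$ cannot cross zero from above. Your attention to the mismatch between the hypothesis of Lemma~\ref{lem:invariance} (the inequality is assumed on all of $\mathcal{D}$) and that of Theorem~\ref{thm:CBF} (it is only supplied on $\mathcal{S}$) is exactly the subtlety that deserves care, and the bootstrap/comparison you sketch is the appropriate remedy.
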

Since the constraint~\eqref{eq:safe_constraint} is affine in $\u$, it can be incorporated into a quadratic optimization problem: Given a desired control input $k_{\mathrm{d}}:\mathcal{D} \rightarrow \mathcal{U}$, we synthesize a \textit{safe} control input $k : \mathcal{D} \rightarrow \mathcal{U}$ by solving 
\begin{equation}
\begin{split}\label{eq:QP}
     k(\x) = \underset{\u \in \mathcal{U}}{\arg \mathrm{min}}& \,\,\Vert \u-k_{\mathrm{d}}(\x)\Vert^2\\
     \text{subj. to} & \,\,\dot{h}(\x,\u) \geq -\alpha(h(\x)),
\end{split}
\end{equation}
in which the nominal control input is modified in a point-wise optimal fashion. The framework is extended to systems of arbitrary relative degree in~\cite{Xiao22}, while~\cite{Glotfelter17} introduces nonsmooth barrier functions relaxing the requirement of continuous differentiability. Environmental CBFs~\cite{Molnar23} present a generalization to systems acting upon a time-varying environment. To conserve space, a function will in the following parts be denoted without explicit reference to its inputs,  thereby differing from the notation used in this section.

\section{Problem and Motivation}\label{sec:Problem}
We consider a nonholonomic vehicle described by
\begin{equation}\label{eq:vehicle}
    \dot{x} = v \cos\psi,\quad
    \dot{y} = v \sin \psi,\quad
    \dot{\psi} = u_1,\quad
    \dot{v}   = u_2,
\end{equation}
where $\p\triangleq[x,y]^\top$ are the Cartesian coordinates of the vehicle, $\v \triangleq \dot{\p}$ is the linear velocity, $\psi \in (-\pi,\pi]$ is the orientation, and $v \in \mathbb{R}$ is the speed. The turning rate $r \triangleq \dot{\psi}$ and acceleration $a \triangleq \dot{v}$ are controlled through the inputs $\u \triangleq [u_1,u_2]^\top \in \mathcal{U}$. The space of admissible inputs is defined by \begin{equation}
    \mathcal{U} = \{\u \in \mathbb{R}^2 :  |r| \leq r_{\max}, |a| \leq a_{\max}\},
\end{equation}
where $r_{\max},a_{\max}>0$ represent the maximum rotation rate and acceleration, respectively. We remark that this system can be expressed in the control affine form~\eqref{eq:nonlinear_eq_u} with $\x \triangleq [\p^\top, \psi, v]^\top \in \mathcal{D}$ and $g = \begin{bmatrix} \textbf{0} & \textbf{I}
\end{bmatrix}$, where $\textbf{0}, \textbf{I}\in \mathbb{R}^{2\times2}$ denote the zero and identity matrices, respectively.

In this study, we consider vehicles that not only are nonholonomic, but additionally need to maintain nonzero and bounded forward speeds:
\begin{assumption}\label{ass:v}
    The vehicle speed is bounded by
    \begin{equation}
        v \in (0,v_{\max}].
    \end{equation}
\end{assumption}
This assumption implies that the vehicle cannot stop or move backwards to avoid collision, which applies to a variety of real systems. While for some vehicles it may not be strictly necessary to maintain a positive speed, preventing reversal can be advantageous because the design of the vehicle shape or its propulsion system makes it more energy efficient to move forwards than in reverse. In addition, forward motion can serve as a means to facilitate future progress towards a desired destination.

\subsection{Control Objectives}
We consider the case where the vehicle nominally moves towards a target position $\p_\mathrm{t} \in \mathbb{R}^2$. Note, however, that a wide range of nominal behaviors are allowed within the proposed approach. The task is completed once the vehicle reaches within an acceptable distance, $d_{\mathrm{acc}} > 0$, of the target. This corresponds to the nominal control objective
\begin{equation}
    \lim_{t\rightarrow\infty} \Vert \p(t) -\p_\mathrm{t} \Vert \leq d_{\mathrm{acc}}.
\end{equation}
The vehicle is expected to navigate in an obstacle-filled environment  whilst executing the nominal task. 
Let the position and velocity of an obstacle be defined as $ \p_i,  \v_i \in \mathbb{R}^2$, where $i \in \mathcal{I}_\mathrm{ob} \subset \mathbb{Z}$ represents the index of this obstacle. The control objective corresponding to this task consists of keeping a large enough distance to each obstacle:
\begin{equation}
   \Vert \p - \p_i \Vert \geq d_{\min,i},\quad \forall i \in \mathcal{I}_\mathrm{ob},\quad\forall t\geq t_0,
\end{equation}
where $d_{\min, i} > 0$ specifies the minimum distance the vehicle must keep to obstacle $i$, and $t_0\geq 0$ is some initial time.
\begin{figure}[t]
\centering
\subfloat[CC][]{\includegraphics[width=0.46\linewidth]{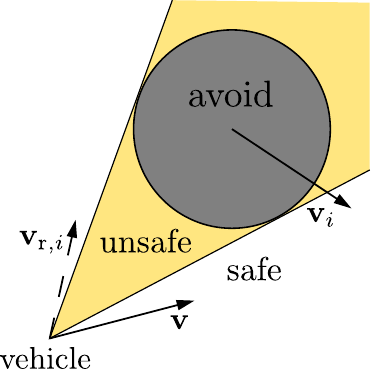}
\label{fig:cc}}\hfill
\subfloat[CC_geom][]{\includegraphics[width=0.44\linewidth]{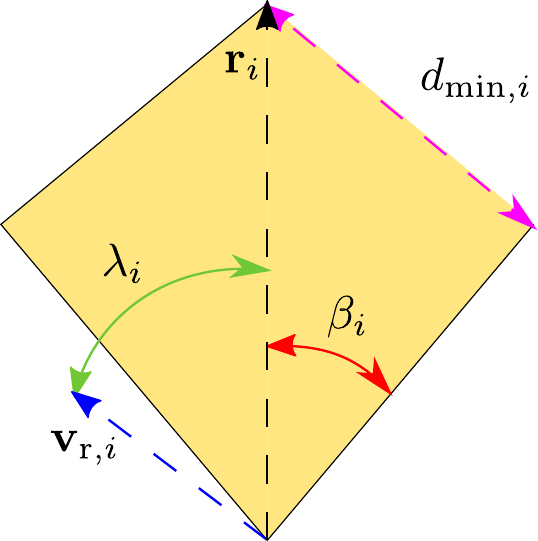}\label{fig:cc_geom}}
\caption{(a) demonstrates the collision cone principle; since the relative velocity is pointing outside of the cone, the vehicle is not headed towards a collision. (b) provides a geometric representation of the collision cone condition.}
\end{figure}

\subsection{The Velocity Obstacle}

The notion of velocity cones is particularly useful for predicting and addressing collisions between moving objects. Consider Figure~\ref{fig:cc}. The relative movement of the vehicle with respect to an obstacle is dictated by the (relative) velocity vector~$\vri \triangleq \v-\voi$. If this vector is directed within a distance $d_{\min, i}$ of the obstacle center, then the vehicle will move towards a collision, and eventually it will collide with the obstacle. This condition corresponds to a cone-shaped set in the velocity space, indicated in yellow, known as the \emph{collision cone}~\cite{Chakravarthy1998, FioriniShiller1998}. As illustrated in Figure~\ref{fig:cc_geom}, the half-angle of the cone can be found as
\begin{equation}\label{eq:beta_i}
    \beta_i \triangleq \arcsin\left(\frac{d_{\min, i}}{d_i}\right),
\end{equation}
where $d_i \triangleq  \Vert\ri \Vert$ and $\ri \triangleq \poi-\p$ is the vector starting at the vehicle origin, $\p$, connecting to the center of obstacle $i$. The requirement for a future collision can now be expressed
\begin{equation}\label{eq:col_cond}
    \lambda_i < \beta_i,
\end{equation}
where the angle between $\vri$ and $\ri$ is defined as
\begin{equation}\label{eq:lambda_i}
  \lambda_i \triangleq  \arccos\left(\frac{\vri \ri}{\Vert \vri \Vert d_i }\right).
\end{equation}
Similar to \cite[Lemma~1]{Haraldsen23} we can show that~\eqref{eq:col_cond} is a necessary condition for collision: 

\begin{lemma}\label{lem:S_C}
Given that the vehicle is not in a collision with obstacle $i$ at time $\tau\geq 0$, if, for all $\vri  \neq 0$, 
\begin{equation}\label{eq:safe_cond}
    \lambda_i(t) \geq \beta_i(t),\quad \forall t\geq \tau,
\end{equation}
then the vehicle will avoid a collision with the obstacle, i.e.,
\begin{equation}
 d_i(t) \geq  d_{\min,i},\quad \forall t \geq \tau.
\end{equation}
\end{lemma}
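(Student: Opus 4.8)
The plan is to recast the claim as forward invariance of the range set $\{d_i \geq d_{\min,i}\}$ and to monitor the sign of the range rate $\dot d_i$ as it meets the critical distance. First I would differentiate $d_i = \Vert\ri\Vert$. Using $\ridot = \voi - \v = -\vri$ gives $\dot d_i = \ri\cdot\ridot/d_i = -\,\ri\cdot\vri/d_i$, and substituting $\ri\cdot\vri = d_i\Vert\vri\Vert\cos\lambda_i$ from the definition~\eqref{eq:lambda_i} collapses this to the identity $\dot d_i = -\Vert\vri\Vert\cos\lambda_i$. This holds for $\vri\neq 0$; when $\vri=0$ we have $\dot d_i=0$, so no approach occurs at such an instant and the hypothesis's exclusion of $\vri=0$ is harmless.

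Next I would translate the hypothesis into a bound on $\dot d_i$. Since $\lambda_i\in[0,\pi]$ and $\beta_i\in(0,\pi/2]$, cosine is decreasing across the relevant range, so $\lambda_i\geq\beta_i$ yields $\cos\lambda_i\leq\cos\beta_i$. From~\eqref{eq:beta_i}, $\sin\beta_i = d_{\min,i}/d_i$, hence $\cos\beta_i = \sqrt{d_i^2-d_{\min,i}^2}/d_i\geq 0$, and therefore $\dot d_i \geq -\Vert\vri\Vert\sqrt{d_i^2-d_{\min,i}^2}/d_i$ whenever $d_i\geq d_{\min,i}$. The crucial structural fact is that this lower bound vanishes exactly on the boundary: at $d_i=d_{\min,i}$ one has $\beta_i=\pi/2$, hence $\dot d_i\geq 0$, so the range cannot be strictly decreasing as it reaches its critical value.

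Finally I would argue by contradiction. If $d_i(t^\ast)<d_{\min,i}$ for some $t^\ast>\tau$, then by continuity of $d_i$ and $d_i(\tau)\geq d_{\min,i}$ there is a last instant $t_1\in[\tau,t^\ast)$ with $d_i(t_1)=d_{\min,i}$ and $d_i<d_{\min,i}$ on $(t_1,t^\ast]$; the bound above forces $\dot d_i(t_1)\geq 0$, contradicting the strict decrease of $d_i$ through $d_{\min,i}$. I expect the genuinely delicate step to be the tangential case $\dot d_i(t_1)=0$, i.e. $\vri\perp\ri$ with $\lambda_i=\beta_i=\pi/2$, in which first-order information alone does not preclude a (higher-order) crossing. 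The natural route to dispose of it is to pass to $w\triangleq\sqrt{d_i^2-d_{\min,i}^2}$, for which the bound reads $\dot w\geq -\Vert\vri\Vert$ with $\Vert\vri\Vert$ bounded on the compact interval $[\tau,t^\ast]$ by Assumption~\ref{ass:v}; this caps the admissible approach rate at the boundary, and, through the same comparison reasoning that underlies Lemma~\ref{lem:invariance}, should prevent the trajectory from leaving $\{d_i\geq d_{\min,i}\}$, yielding the asserted conclusion $d_i(t)\geq d_{\min,i}$ for all $t\geq\tau$. I regard this boundary-degeneracy argument, rather than any of the preceding algebra, as the technical crux of the proof.
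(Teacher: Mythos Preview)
Your proposal is correct and follows essentially the same route as the paper: compute $\dot d_i = -\Vert\vri\Vert\cos\lambda_i$, use $\lambda_i\geq\beta_i\Rightarrow\cos\lambda_i\leq\cos\beta_i=\sqrt{d_i^2-d_{\min,i}^2}/d_i$, and conclude forward invariance from $\dot d_i\geq 0$ at $d_i=d_{\min,i}$. The paper simply invokes the boundary inequality $\dot d_i\geq 0$ and stops; your additional worry about the tangential case $\dot d_i(t_1)=0$ and the $w=\sqrt{d_i^2-d_{\min,i}^2}$ reparametrization is more scrupulous than the paper itself, which does not address that degeneracy explicitly.
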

\begin{proof}
The time-derivative of the distance, $d_i$, is given by
\begin{equation}
\dot{d}_i=-\frac{\ri^\top}{d_i} \vri.
\end{equation}
Hence, if $\vri = 0$, a collision cannot occur since $\dot{d}_i = 0$. If $\vri \neq 0$, we use that \eqref{eq:safe_cond} is equivalent to
\begin{equation}
\cos \gamma_i \leq \cos \beta_i.
\end{equation}
With $\frac{\ri^\top}{d_i}\vri = \Vert \vri\Vert \cos \lambda_i$ and $\cos\beta_i =\frac{\sqrt{d_i^2-d_{\min,i}^2}}{d_i}$ by \eqref{eq:beta_i} and \eqref{eq:lambda_i}, we thus have 
\begin{equation}\label{eq:rnorm_dot}
    \dot{d}_i\geq  -\frac{\Vert \vri\Vert }{d_i} \sqrt{d_i^2-d_{\min,i}^2}.
\end{equation}
 Since $ \dot{d}_i \geq 0$ for $d_i = d_{\min,i}$ by~\eqref{eq:rnorm_dot}, it follows that $d_i(\tau)\geq d_{\min,i} \implies d_i(t)\geq d_{\min,i},\,\forall t\geq \tau$. 
\end{proof}

\begin{figure}[t]
\centering
\subfloat[VO][]{\includegraphics[width=0.46\linewidth]{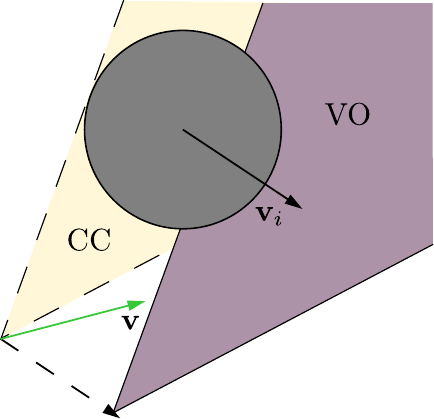}
\label{fig:vo}}\hfill
\subfloat[VO_geom][]{\includegraphics[width=0.44\linewidth]{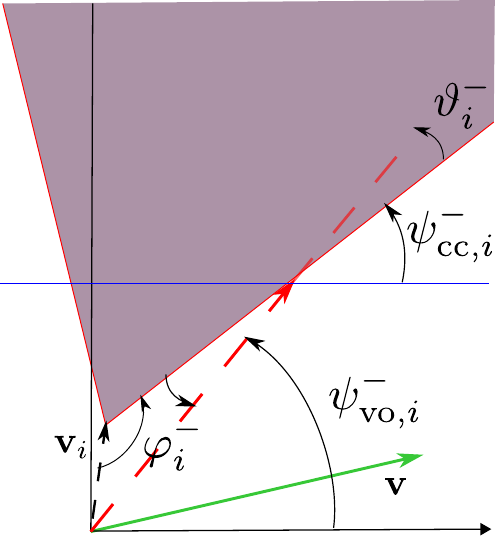}\label{fig:vo_geom}}
\caption{The velocity obstacle (VO) is obtained by translating the collision cone (CC) by the obstacle velocity in (a).  (b) illustrates the derivation of the intersection angle $\psi^-_{\mathrm{vo},i}$. }
\end{figure}

For generating avoidance maneuvers, this measure is not immediately useful. 
Moreover, 
 the analytical condition \eqref{eq:col_cond} becomes undefined when the relative velocity is zero. This situation can indeed occur in practical scenarios and may give rise to numerical challenges when using \eqref{eq:col_cond} as the basis for computing avoidance maneuvers as in~\cite{Tayal2023, Thontepu2023}.
The cone can more conveniently  be described in terms of the absolute vehicle velocity by adding the obstacle velocity to each velocity in the collision cone set. This transformation was proposed in~\cite{FioriniShiller1998} as the \emph{velocity obstacle} set, illustrated in Figure~\ref{fig:vo}.  According to Figure~\ref{fig:vo_geom}, the angles at which the vehicle velocity, $\v$, intersects the velocity obstacle cone are
\begin{equation}\label{eq:psivoi}
    \psi^\pm_{\mathrm{vo},i} = \psi_{\mathrm{cc},i}^\pm + \vartheta^\pm _i,
\end{equation}
where $\psi_{\mathrm{cc},i}^\pm \triangleq \angle \ri \pm \beta_i$ and
\begin{equation}\label{eq:vartheta}
    \vartheta^\pm _i \triangleq \arcsin\left(\frac{v_i}{v}\sin\left(\varphi_i^\pm\right)\right), \,\,\varphi_i^\pm\triangleq \pi-\psi_i + \psi_{\mathrm{cc},i}^\pm,
\end{equation}
with $v_i \triangleq \Vert \voi \Vert$ and $\psi_i \triangleq \angle \voi$ defined for conciseness. Interestingly, the angles~\eqref{eq:vartheta} can be undefined if the speed of the vehicle is below a certain magnitude dependent on the obstacle speed, which signifies that the vehicle's velocity vector does intersect the vertices of the velocity obstacle cone at any angle. Importantly, this shows that it may be necessary for the vehicle to increase its speed to maintain safety.

 While many vehicles exhibit limited or even complete absence of reversing capabilities, it is noteworthy that most collision avoidance methods formulated within the framework outlined in Section~\ref{sec:MathFramework} necessitate the presence of these abilities to ensure safety.  By leveraging the above introduced velocity obstacle concept, we formulate in the next section a strategy that instead allows the vehicle to move forward when attempting to avoid collisions and moreover stays defined for $\v_{\mathrm{r}, i} = 0$.

\section{Safety-Critical Control using\\ Velocity Obstacles}\label{sec:Controller}
We propose a control barrier function (CBF)-based strategy for computing the control inputs of the vehicle, consisting of the turning rate $u_1$ and forward acceleration $u_2$, aimed to preserve safety in dynamic environments. In Section~\ref{ssec:Nomcontrol}, we define the desired inputs of the vehicle for completing the nominal task of reaching a target.  Next, in Section~\ref{ssec:Speedadj}, we formulate a barrier function for adjusting the speed of the vehicle such that the directions~\eqref{eq:psivoi} corresponding to the velocity obstacle cone are well-defined. The required modification of the acceleration input takes place once an obstacle is within a certain proximity of the vehicle.  In Section~\ref{ssec:Oavoid}, we establish a barrier function for steering the vehicle in a collision-free direction, derived from the velocity obstacle cone. This controller is activated if the distance to the obstacle is even further reduced, indicating that a collision is approaching. We derive explicit conditions for safety under the above outlined strategy in Section~\ref{ssec:theor_safe}. %The optimization-based controllers are formulated as Quadratic Programs (QPs), thus enabling real-time employment in time-varying environments.

\subsection{Nominal Control Laws}\label{ssec:Nomcontrol}
The vehicle is to move towards a target position at some specified forward speed $ v_{\mathrm{d}}  \in (0, v_{\max}]$. The desired heading is chosen as~\cite{breivik2005}
\begin{equation}
    \psi_{\mathrm{d}} = \atan(y_\mathrm{t}-y, x_\mathrm{t}-x),
\end{equation}
such that the vehicle takes the shortest path to $\p_\mathrm{t} \triangleq [x_\mathrm{t}, y_\mathrm{t}]^\top$. 
The desired acceleration and heading rate of the vehicle are chosen to keep the desired speed and heading as
\begin{equation}
    r_{\mathrm{d}} = -K_r (\psi-\psi_{\mathrm{d}}), \quad a_{\mathrm{d}}= -K_a (v-v_\mathrm{d}),
\end{equation}
where the angular difference is mapped to $(-\pi, \,\pi]$, and $K_r,K_a > 0$ are control gains.
\subsection{Speed Adjustments}\label{ssec:Speedadj}
Upon examination of \eqref{eq:vartheta}, it becomes evident that the vehicle must maintain a sufficiently high speed compared to that of the obstacle to enable the existence of an evasive direction.  The requirement can be summarized  as
\begin{equation}\label{eq:vcond}
    v \geq v_i|\sin(\varphi_i^j)|, \quad \forall j \in \{\pm\}.
\end{equation}
Based on this condition, we define a candidate CBF for adjusting the speed relative to obstacle $i$ as
\begin{equation}\label{eq:hvi}
    h_{v,i} = \min_{j,k } \, h^{k,j}_{v,i} - \kappa_{\min},
\end{equation}
where $j,k \in \{\pm\}$ and we define $h^{\pm,j}_{v,i} \triangleq v\pm v_i \sin(\varphi^j)$.  Moreover, since the angle~\eqref{eq:vartheta} is non-differentiable at $v = v_i \sin(\varphi_i^j)$, we add a margin $\kappa_{\min}>0$ ensuring that the derivative is well-defined when $h_{v,i} \geq 0$. The parameter is also useful for control design; by increasing $\kappa_{\min}$, it enforces a swifter evasion by effectively raising the speed kept relative to the obstacle speed.

While $h_{v,i}$ is not continuously differentiable, it is defined by the minimum of the smooth functions $h^{k,j}_{v,i}$. Hence, controller synthesis can be enabled by the framework proposed in~\cite{Glotfelter19} for the class of nonsmooth barrier functions~(NBFs). To reduce conservatism,  the speed is adjusted only for obstacles that are within a distance $d_{v,i} > 0$ of the vehicle. As such, the acceleration input is given by
\begin{equation}
    u_2 = \begin{cases}
        k_2 &\text{if } \mathcal{I}_{v, \mathrm{ob}} \neq \emptyset,\\
        a_{\mathrm{d}}& \text{otherwise},
    \end{cases}
\end{equation}
where $\mathcal{I}_{v, \mathrm{ob}} \triangleq \{i \in \mathcal{I}_{\mathrm{ob}}: d_{i} \leq d_{v,i}\}$ contains the indices of the obstacles that are closer than the required distances and $k_2$ represents the modified acceleration, computed via
\begin{equation}\label{eq:k2}
\begin{split}
   &k_2= \underset{a \in \mathcal{U}}{\arg \mathrm{min}} \,\,\Vert a-a_{\mathrm{d}}\Vert^2\\
    & \text{subj. to}  \,\,\dot{h}^{k,j}_{v,i} \geq -\alpha(h_{v,i}),\,\,  \forall(j,k) \in \mathcal{I}_{\epsilon_v,i},\,\forall i \in \mathcal{I}_{v, \mathrm{ob}}.
\end{split}
\end{equation}
Moreover, $\mathcal{I}_{\epsilon_v,i} \triangleq \{(j,k) : |h^{k,j}_{v,i} - h_{v,i}| \leq \epsilon_v\}$ is the almost-active set for the candidate control NBF (CNBF) of obstacle $i$, with $\epsilon_v > 0$. The constraint can be computed analytically
\begin{equation}\label{eq:h_v_kjdot}
\begin{split}
    \dot{h}^{\pm,j}_{v,i} = a \pm \dot{v}_{i} \sin(\varphi^j_i) \pm v_i\cos(\varphi^j_i)\left(-\Dot{\psi}_i + \Dot{\psi}_{\mathrm{cc},i}^j\right),
\end{split}
\end{equation}
where the expressions for $\dot{\psi}_{\mathrm{cc},i}^\pm$ are found in~\cite{Haraldsen23}.

\subsection{Obstacle Avoidance}\label{ssec:Oavoid}
If the distance to an obstacle is further reduced to a distance $d_{\psi,i}< d_{v,i}$, we modify the turning input $u_1$ with respect to the velocity obstacle cone through a second optimization.  Define the angular distances to the left and right vertex of the velocity obstacle cone, respectively, 
\begin{equation} \label{eq:delta_i}
    \delta_i^{\pm}\triangleq\pm \psi\mp \psi^\pm_{\mathrm{vo},i},
\end{equation}
where $\psi_{\mathrm{vo},i}^\pm$ are defined in~\eqref{eq:psivoi}. The angles are wrapped into the domain $\delta^\pm_i \in (-\pi,\pi]$ such that the distance
\begin{equation}\label{eq:deltavo}
    \delta_i^{\mathrm{vo}} \triangleq \underset{\delta_i \in \{\delta_i^+, \delta_i^-\}}{\arg\min}\, |\delta_i|,
\end{equation}
is negative when the vehicle velocity $\v$ lies within the velocity obstacle set and positive otherwise, see Figure~\ref{fig:delta}. Thus, the barrier function
\begin{equation}\label{eq:h_psi}
    h_{\psi,i} = \delta^{\mathrm{vo}}_i-\delta_{\min},
\end{equation}
encapsulates that the vehicle should keep a velocity outside of the velocity obstacle cone, with at least an angle~$\delta_{\min} \in [0,\frac{\pi}{2})$ to the closest vertex.  Note that $h_{\psi,i}$ is not continuously differentiable, however, the component functions  $h^\pm_{\psi,i} \triangleq \delta^\pm_i - \delta_{\min}$ are smooth.  Hence,  the desired heading rate is modified according to 
\begin{equation}
\begin{split}\label{eq:k1}
   &k_1= \underset{r\in \mathcal{U}}{\arg \mathrm{min}} \,\,\Vert r-r_{\mathrm{d}}\Vert^2\\
    & \text{subj. to}  \,\,\dot{h}^j_{\psi,i} \geq -\alpha(h_{\psi,i}),\,\, \forall j \in \mathcal{I}_{\epsilon_\psi,i},\,\, \forall i \in \mathcal{I}_{\psi, \mathrm{ob}},
\end{split}
\end{equation}
where $\mathcal{I}_{\epsilon_\psi,i} \triangleq \{j : |h^{j}_{\psi,i} - h_{\psi,i} |\leq \epsilon_\psi\}$ is the almost-active set for \eqref{eq:h_psi}, with $\epsilon_\psi> 0$, $\mathcal{I}_{\psi, \mathrm{ob}} \triangleq \{i \in \mathcal{I}_{\mathrm{ob}}: d_{i} \leq d_{\psi,i}\}$, and 
\begin{equation}\label{eq:h_psi_dot}
\begin{split}
    \dot{h}_{\psi,i}^\pm = \pm r \mp \dot{\psi}_{\mathrm{cc},i}^\pm\mp(-\dot{\psi}_i+\dot{\psi}_{\mathrm{cc},i}^\pm) \tfrac{  v_i \cos(\varphi_i^\pm)}{\sqrt{v^2-v_i^2\sin^2(\varphi_i^\pm)}} \\\mp(v\dot{v}_i-v_i u_2) \tfrac{ \sin(\varphi_i^\pm)}{v \sqrt{v^2 -v^2_i\sin^2(\varphi_i^\pm)}}.
\end{split}
\end{equation}
Notice that the acceleration input, $u_2$, is involved in the computation of $k_1$ through \eqref{eq:h_psi_dot}. Furthermore, the turning input is chosen as
\begin{equation}
    u_1 = \begin{cases}
        k_1 &\text{if } \mathcal{I}_{\psi, \mathrm{ob}} \neq \emptyset,\\
        r_{\mathrm{d}}& \text{otherwise}.
    \end{cases}
\end{equation}

\begin{figure}
    \centering
    \includegraphics[width=0.65\linewidth]{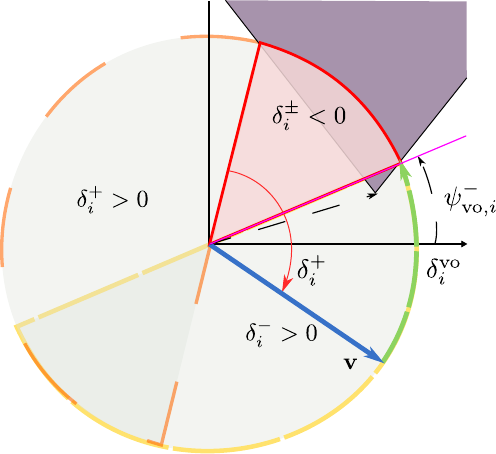}
    \caption{Intuition behind the angular distances $\delta^\pm_i$ and candidate CNBF $h_{\psi,i}$. The distance from the vehicle velocity $\v$ to the right vertex of the velocity obstacle cone, represented by $\delta^-_i$, is positive on the half-plane to the right of the corresponding direction ($\psi^-_{\mathrm{vo},i}$), and negative on the other half-plane. This property is reversed for the distance to the left vertex, $\delta^+_i$. The distance $\delta^{\mathrm{vo}}_i$ is then positive when $\v$ lies outside of the velocity obstacle, signifying safety, and negative otherwise. }
    \label{fig:delta}
\end{figure}

\subsection{Safety Guarantees}\label{ssec:theor_safe}
Since the relative degree is consistently defined, the functions $h_{\psi,i}$ and $h_{v,i}$ serve as valid CNBFs for~\eqref{eq:vehicle} with no input limitations (i.e., with $\mathcal{U} = \mathbb{R}^2$).  To extend this validity to them being CNBFs under input constraints,  the  behavior of an obstacle must be confined within similar physical bounds:
\begin{assumption}\label{ass:o_dyn}
    The dynamics of an obstacle are bounded:
    \begin{equation}
        |\dot{v}_i| \leq a_{\max,i}, \,|\dot{\psi}_i|\leq r_{\max,i}, \,v_i  \leq v_{\max,i} \quad \forall t \geq t_0,
    \end{equation}
    where $a_{\max,i}, r_{\max,i}, v_{\max,i}\geq 0$ are constant parameters.
\end{assumption}
 Given that every dynamic object is subject to certain physical constraints, Assumption~\ref{ass:o_dyn} is justified. In the final theorem, we outline sufficient conditions,  namely explicit lower bounds on the vehicle's speed and input limits, that under the proposed strategy guarantee forward invariance of the sets
\begin{equation}
   \!\! \mathcal{S}_{v,i} \triangleq \{\x \in \mathcal{D}\!:\! h_{v,i} \geq 0\},\quad\mathcal{S}_{\psi,i} \triangleq \{\x \in \mathcal{D}\!:\! h_{\psi,i} \geq 0\}.
\end{equation}

\begin{figure*}[t]
  \centering
  \subfloat[sim_multi_line_NE][]{\includegraphics[width=0.5\linewidth]{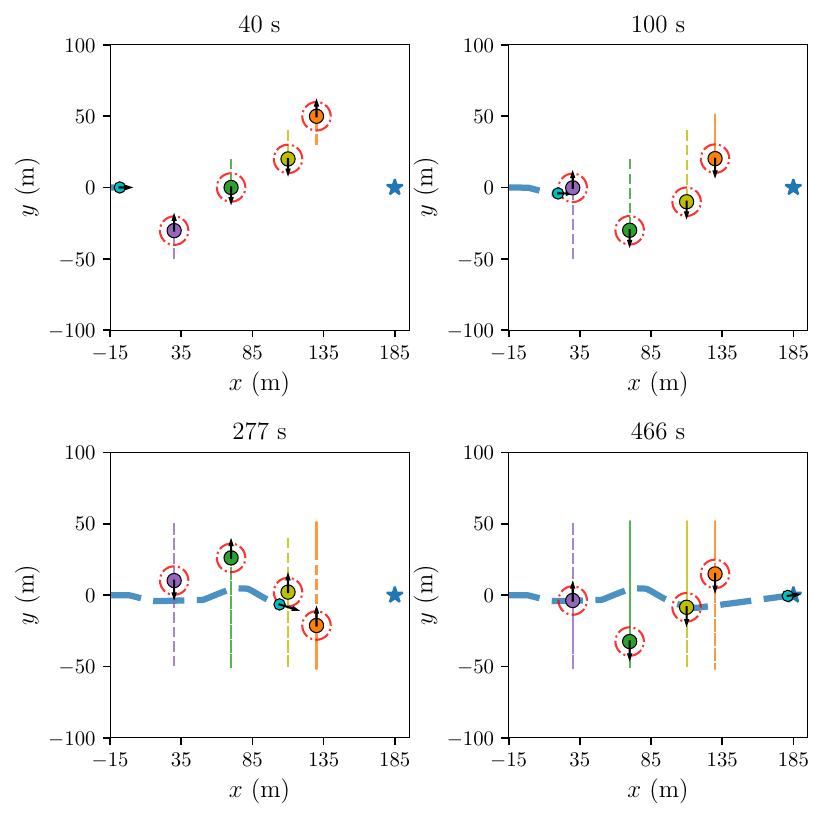}
\label{fig:sim_mol_NE}}
\subfloat[sim_multi_line_states][]{\includegraphics[width=0.5\linewidth]{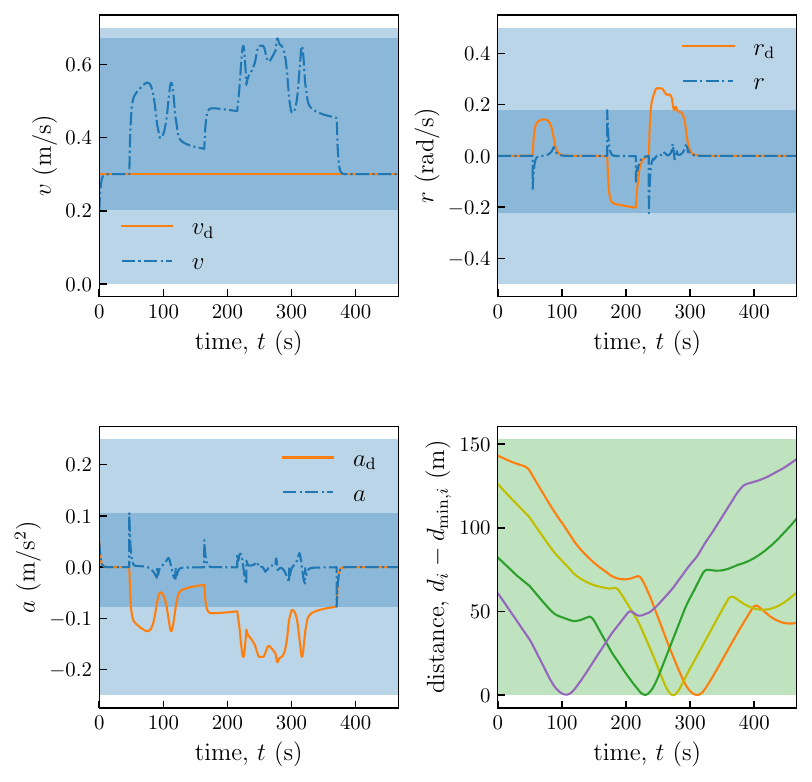}\label{fig:sim_mol_states}}
\caption{(a) shows the trajectory of the vehicle as it maneuvers between four obstacles to the target position, where the vehicle and the target are both marked in blue. The red, dashed circles indicate the distances $d_{\min,i}$ that should be kept between the center of the vehicle and the center of an obstacle. The corresponding forward speed, turning rate, and forward acceleration of the vehicle, and the distance between the vehicle and each obstacle, are displayed in (b).    }\label{fig:sim_mol}
\end{figure*}
\begin{figure*}
  \centering
\subfloat[sim_multi_circle_NE][]{\includegraphics[width=0.5\linewidth]{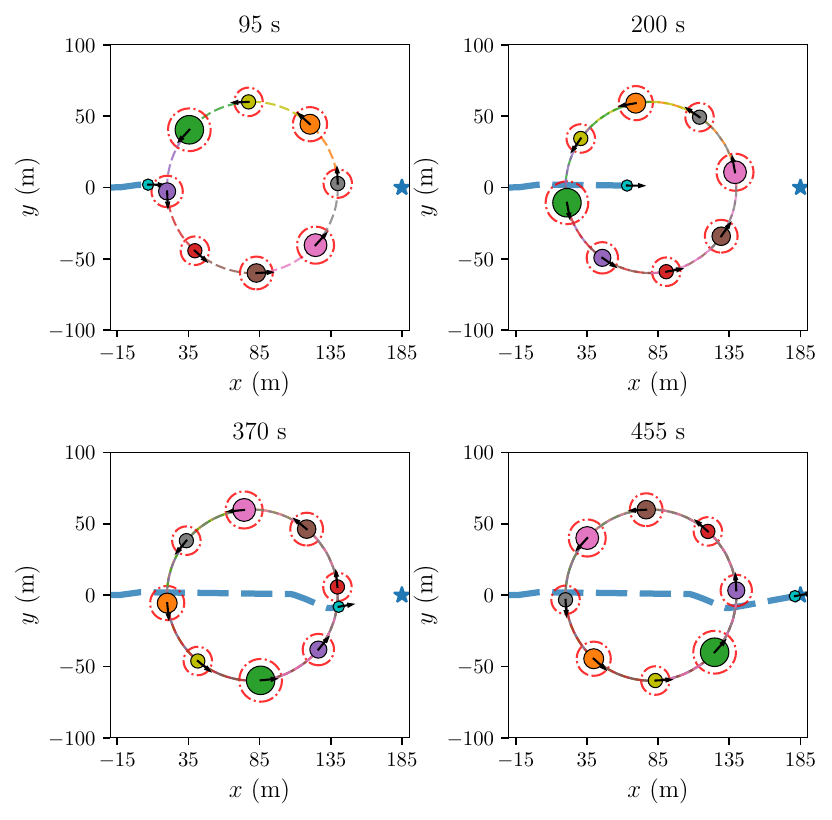}
\label{fig:sim_moc_NE}}
  \subfloat[sim_multi_circle_states][]{\includegraphics[width=0.5\linewidth]{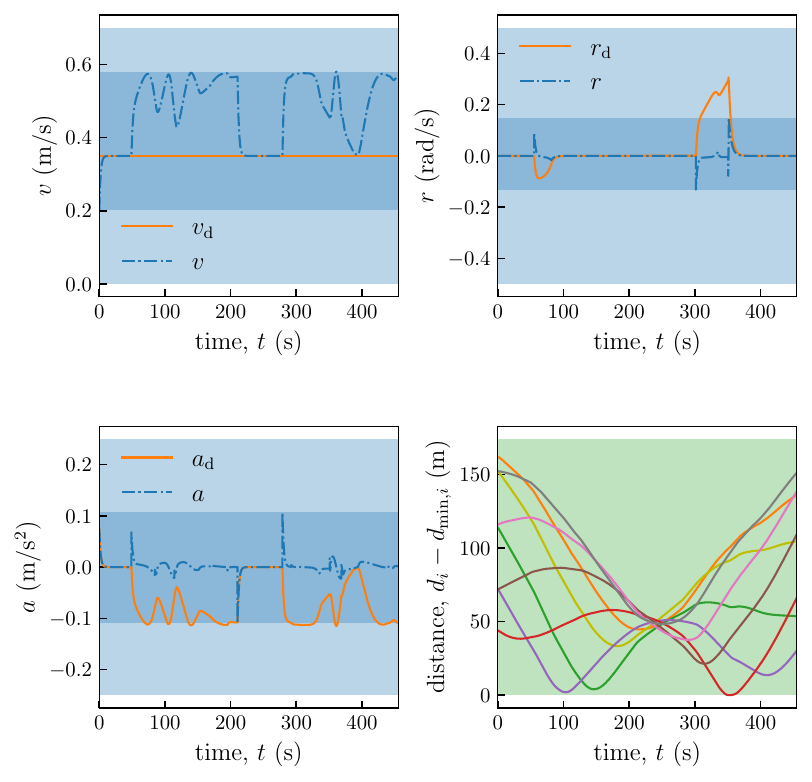}\label{fig:sim_moc_states}}
\caption{(a) shows the trajectory of the vehicle avoiding eight obstacles moving in a circle. The corresponding forward speed, turning rate, and forward acceleration of the vehicle, and the distances to the obstacles, are displayed in~(b)   }
\label{fig:sim_moc}
\end{figure*}
\begin{theorem}\label{thm:Theo_guarantees}
Consider a vehicle~\eqref{eq:vehicle} and let Assumption~\ref{ass:v} and \ref{ass:o_dyn} hold. If the maximum vehicle speed exceeds the maximum speed of obstacle $i$ such that $v_{\max} \geq v_{\max,i} +  \kappa_{\min}$ and 
    \begin{align}\label{eq:amax}
        a_{\max}& \geq a_{\max,i},\\ \label{eq:wmax}
        r_{\max}& \geq r_{\max,i} + \frac{a_{\max,i} + a_{\max}}{\kappa_{\min}},
    \end{align}
  then the control inputs \eqref{eq:k2} and \eqref{eq:k1} render the set $\mathcal{S}_{v,i} \cap \mathcal{S}_{\psi,i}$ forward invariant, i.e.,
    \begin{equation}
        \x(t_0) \in \mathcal{S}_{v,i} \cap \mathcal{S}_{\psi,i} \implies   \x(t) \in \mathcal{S}_{v,i} \cap \mathcal{S}_{\psi,i}, \forall t \geq t_0.
    \end{equation}
\end{theorem}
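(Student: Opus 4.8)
The plan is to reduce forward invariance of $\mathcal{S}_{v,i}\cap\mathcal{S}_{\psi,i}$ to the pointwise feasibility of the two quadratic programs \eqref{eq:k2} and \eqref{eq:k1} under the admissible input set $\mathcal{U}$, and then to verify that feasibility directly from the hypotheses $v_{\max}\geq v_{\max,i}+\kappa_{\min}$, \eqref{eq:amax}, and \eqref{eq:wmax}. Since $h_{v,i}$ and $h_{\psi,i}$ are pointwise minima of the smooth functions $h^{k,j}_{v,i}$ and $h^{\pm}_{\psi,i}$, I would invoke the nonsmooth counterpart of Theorem~\ref{thm:CBF} established in \cite{Glotfelter19}: forward invariance of the $0$-superlevel set of such a minimum holds whenever the applied input satisfies $\dot h^{\ell}\geq-\alpha(h)$ for every component $\ell$ in the almost-active set, which is exactly the constraint enforced in \eqref{eq:k2} and \eqref{eq:k1}. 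Thus, provided the activation radii $d_{v,i}$ and $d_{\psi,i}$ are large enough that the modified inputs are in force on the respective boundaries $\partial\mathcal{S}_{v,i}$ and $\partial\mathcal{S}_{\psi,i}$, the whole argument reduces to exhibiting admissible minimizers of the two programs. Moreover, because $h_{v,i}$ does not depend on the turning rate while $h_{\psi,i}$ depends on the acceleration only through the already-committed value $u_2$, the two controllers act on decoupled constraints and cannot conflict; forward invariance of the intersection then follows from that of $\mathcal{S}_{v,i}$ and $\mathcal{S}_{\psi,i}$ separately.

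For the speed program \eqref{eq:k2}, I would first record the geometric consequence of lying in $\mathcal{S}_{v,i}$: the inequalities $v\pm v_i\sin(\varphi_i^{j})\geq\kappa_{\min}$ give $v\geq\kappa_{\min}+|v_i\sin(\varphi_i^{j})|$, hence $v\geq\kappa_{\min}>0$ and $\sqrt{v^2-v_i^2\sin^2(\varphi_i^{j})}\geq\kappa_{\min}$. From \eqref{eq:h_v_kjdot}, the acceleration enters every component $\dot h^{\pm,j}_{v,i}$ with coefficient $+1$, so the choice $a=a_{\max}$ simultaneously maximizes all component derivatives; feasibility on $\partial\mathcal{S}_{v,i}$, where $\alpha(h_{v,i})=0$, therefore reduces to showing that $a_{\max}$ dominates the worst case of the obstacle-induced terms $\pm\dot v_i\sin(\varphi_i^{j})\pm v_i\cos(\varphi_i^{j})(-\dot\psi_i+\dot\psi^{j}_{\mathrm{cc},i})$. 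These are bounded through Assumption~\ref{ass:o_dyn} and $v_{\max}\geq v_{\max,i}+\kappa_{\min}$, and condition \eqref{eq:amax} is precisely what closes this gap.

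The main obstacle I anticipate is the feasibility of the steering program \eqref{eq:k1}, whose constraint \eqref{eq:h_psi_dot} both couples to the acceleration $u_2$ and carries the denominators $\sqrt{v^2-v_i^2\sin^2(\varphi_i^{\pm})}$. Writing $\dot h^{\pm}_{\psi,i}=\dot\delta^{\pm}_i=\pm r\mp\dot\psi^{\pm}_{\mathrm{vo},i}$, the turning rate enters with coefficient $\pm1$, so that $r=\pm r_{\max}$ maximizes the relevant component and feasibility reduces to the bound $r_{\max}\geq|\dot\psi^{\pm}_{\mathrm{vo},i}|$, i.e. $r_{\max}$ must dominate the rotation rate of the cone edge. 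I would bound $\dot\psi^{\pm}_{\mathrm{vo},i}$ by splitting it into the obstacle-rate contribution, controlled by $r_{\max,i}$ via Assumption~\ref{ass:o_dyn}, and the acceleration-coupled part. The latter is the delicate piece: it carries $\dot v_i$ and $u_2$ over the denominator $\sqrt{v^2-v_i^2\sin^2(\varphi_i^{\pm})}$, which is bounded below by $\kappa_{\min}$ on $\mathcal{S}_{v,i}$ by the previous step, so that together with $|u_2|\leq a_{\max}$, $|\dot v_i|\leq a_{\max,i}$, and $|v_i\sin(\varphi_i^{\pm})|\leq v-\kappa_{\min}<v$ it contributes at most $(a_{\max,i}+a_{\max})/\kappa_{\min}$. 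I expect the combination of these estimates to reproduce exactly the right-hand side of \eqref{eq:wmax}, which therefore guarantees an admissible $r\in[-r_{\max},r_{\max}]$ meeting the constraint on $\partial\mathcal{S}_{\psi,i}$. Combining the two feasibility results with the nonsmooth invariance argument of the first paragraph yields $\x(t_0)\in\mathcal{S}_{v,i}\cap\mathcal{S}_{\psi,i}\implies\x(t)\in\mathcal{S}_{v,i}\cap\mathcal{S}_{\psi,i}$ for all $t\geq t_0$, completing the proof.
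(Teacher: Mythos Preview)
Your high-level plan---reduce forward invariance to pointwise feasibility of \eqref{eq:k2} and \eqref{eq:k1} on the respective boundaries, exploit the decoupling between the two programs, and invoke the nonsmooth CBF machinery of \cite{Glotfelter19}---matches the paper's. The genuine gap is that you never control the rotation rate $\dot\psi_{\mathrm{cc},i}^{\pm}$ of the collision-cone edges. This quantity appears in both \eqref{eq:h_v_kjdot} and \eqref{eq:h_psi_dot} and is \emph{not} bounded by Assumption~\ref{ass:o_dyn}: it depends on the relative velocity and on $1/d_i$, so no crude estimate using only $a_{\max,i}$, $r_{\max,i}$, $v_{\max,i}$ will close it. Your assertion that ``condition \eqref{eq:amax} is precisely what closes this gap'' for the speed program, and your silence about the $\dot\psi_{\mathrm{cc},i}^{\pm}$ contributions in the steering program, both founder on this point.

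The paper handles the two cases by different mechanisms that you are missing. For the speed program it does not bound the obstacle terms; instead it evaluates $\dot h_{v,i}^{k,j}$ exactly on $\partial\mathcal{S}_{v,i}$, where $h_{v,i}^{k,j}=0$ pins $\sin\varphi_i^{j}$ to $\mp v'/v_i$ with $v'\triangleq v-\kappa_{\min}$, and substitution into \eqref{eq:h_v_kjdot} collapses to the identity $\dot h_{v,i}^{k,j}=2a$, so any $a\geq0$ suffices and the $\dot\psi_{\mathrm{cc},i}$ term disappears entirely. For the steering program the key ingredient is the geometric sign fact that inside $\mathcal{S}_{\psi,i}$ one has $\mp\dot\psi_{\mathrm{cc},i}^{\pm}\geq0$ (the cone edge rotates away once the velocity lies outside the VO; see \cite{Lalish2008,Haraldsen2020}). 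Combined with the elementary bound $\lvert v_i\cos(\varphi_i^{\pm})\rvert\big/\sqrt{v^2-v_i^2\sin^2(\varphi_i^{\pm})}\leq1$, this allows the two $\dot\psi_{\mathrm{cc},i}^{\pm}$ contributions in \eqref{eq:h_psi_dot} to be discarded with the correct sign; only then do your denominator estimate $\sqrt{v^2-v_i^2\sin^2(\varphi_i^{\pm})}\geq\kappa_{\min}$ and Assumption~\ref{ass:o_dyn} reproduce exactly the right-hand side of \eqref{eq:wmax}. Without these two ingredients the feasibility of both QPs remains unproven.
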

\begin{proof}
Consider first the acceleration input given by~\eqref{eq:k2}. It is evident from \eqref{eq:vcond} that the existence of a safe speed relative to obstacle $i$ requires that the vehicle's maximum speed is greater than or equal to the maximum speed of the obstacle. Denote $v' \triangleq v-\kappa_{\min}$. The property $\x \in \mathcal{S}_{v,i}$ remains satisfied whenever $v' \geq v_i$, in which case the vehicle's ability to remain in $\mathcal{S}_{v,i}$ is guaranteed if $a_{\max}$ exceeds the magnitude of $\dot{v}_i$.  For speeds $v' < v_i$, we evaluate $h_{v,i} \equiv h^{k,j}_{v,i}$ on the boundary  $\partial \mathcal{S}_{v,i} =\{\x \in \mathcal{D}: h^{k,j}_{v,i} = 0\}$, with $j,k \in \{\pm\}$ given by~\eqref{eq:hvi}. We then have $\varphi^j_i =k \arcsin(\frac{v'}{v_i})$, and
 \begin{equation}\label{eq:derivatives}
    v_i \dot{\varphi}^j_i \cos(\varphi^j_i)= k \left(a -  \dot{v}_i\frac{v'}{v_i}\right) , \quad \dot{v}_i \sin(\varphi_i^j) = k \dot{v}_i\frac{v'}{v_i}.
 \end{equation}
Insertion of~\eqref{eq:derivatives} in~\eqref{eq:h_v_kjdot} yields
 \begin{equation}
    \dot{h}^{k,j}_{v,i} =  a +  \dot{v}_i \frac{v'}{v_i} +\left( a -  \dot{v}_i\frac{v'}{v_i}\right) = 2a.
 \end{equation}
 Hence, $a\geq 0$ enforces $\dot{h}^{k,j}_{v,i}\geq 0$ in this case. Thus, forward invariance of $\mathcal{S}_{v,i}$ is assured under the condition \eqref{eq:amax}. 
 
Having established that the speed is lower bounded by $v \geq v_i |\sin(\varphi^j_i)|+\kappa_{\min}$ for all $j \in \{\pm\}$, we must now show that the vehicle can maintain $h_{\psi,i}\geq 0$ through the input~\eqref{eq:k1}. With $j$ now defined by~\eqref{eq:deltavo}, forward invariance of $S_{\psi,i}$ holds if $\dot{h}^j_{\psi,i} \geq 0$ on the boundary $\partial S_{\psi,i} = \{\x\in\mathcal{D}: h_{\psi,i}^j = 0\}$.  Within $S_{\psi,i}$, it holds that
 $\mp \dot{\psi}_{\mathrm{cc}}^\pm \geq 0$, which can be verified geometrically~\cite{Lalish2008,Haraldsen2020}. Thus, 
 \begin{equation}\label{eq:hpsidot_thm}
\dot{h}^\pm_{\psi,i} \geq \pm r\pm\dot{\psi}_i \mp\tfrac{ (v\dot{v}_i-v_i a) \sin(\varphi_i^\pm)}{v \sqrt{v^2 -v^2_i\sin^2(\varphi_i^\pm)}},
 \end{equation}
where we use that $\tfrac{v_i \cos(\varphi_i^\pm)}{\sqrt{v^2-v_i^2\sin^2(\varphi_i^\pm)}}\in [-1, 1]$. Hence, under Assumption~\ref{ass:o_dyn}, the condition~\eqref{eq:wmax} guarantees $\dot{h}_{\psi,i}^j \geq 0$, implying forward invariance of $\mathcal{S}_{v,i}\cap\mathcal{S}_{\psi,i}$.
\end{proof}
%notably
 Notably, when $\x \in \mathcal{S}_{v,i}\cap\mathcal{S}_{\psi,i}$,  collision avoidance of obstacle $i$ is guaranteed by Lemma~\ref{lem:S_C}. Under the conditions outlined in Theorem~\ref{thm:Theo_guarantees}, the proposed controller ensures the vehicle's safety in encounters with a single obstacle, given that the parameters $d_{v,i}$ and $d_{\psi,i}$ are sufficiently large to facilitate convergence to $S_{v,i} \cap S_{\psi,i}$ prior to any potential collision. In scenarios involving multiple obstacles, the vehicle's safety depends on whether the independent avoidance maneuvers are compatible, in which case safety is upheld.
\begin{remark}
  Note that  \eqref{eq:wmax} is a conservative bound, as the forward acceleration typically opposes the obstacle's acceleration, resulting in a reduction of the final term in \eqref{eq:hpsidot_thm}.  Hence, the parameter $\kappa_{\min}$ can be selected considerably lower than what is required by this condition. 
\end{remark}

\section{Simulations}\label{sec:Simulations}
Now, we demonstrate the proposed collision avoidance strategy in simulations of a vehicle with the unicycle kinematics~\eqref{eq:vehicle}, navigating among moving obstacles to a target position~$\p_\mathrm{t} = [185, 0]^\top$~m. We used $r_{\max} = 0.5$~rad/s and $a_{\max} = 0.25$~m/s$^2$ as input constraints and a maximum forward speed of $v_{\max} = 0.7$~m/s. The radius of the vehicle was selected as $R = 5$~m giving a minimum distance between the centers of the vehicle and obstacle $i$ as $d_{\min,i} = R + R_i$,  $R_i > 0$  being the radius of the obstacle. In practice, this parameter can be increased above the combined radii to keep some space between the object boundaries. Furthermore, we employed distances $d_{\psi,i} = 30 + d_{\min,i}$~m and $d_{v,i} = d_{\psi,i} + 5$~m. In the computation of $h_{v,i}$ and $h_{\psi,i}$,  we used the parameters $\epsilon_\psi = \epsilon_v= \kappa_{\min} = \delta_{\min}= 0.05$. We used a linear class-$\mathcal{K}$ function $\alpha(h) = -\gamma h, \gamma = 0.5$  in both cases. The nominal control gains were set to $K_a = K_r = 0.5$ and the acceptance distance to $d_{\mathrm{acc}} = 4$~m. The first-order Euler method was used for the numerical integration with step-size~$0.01$~s.

First, the vehicle must navigate past four obstacles\footnote{See a simulation video at: \url{https://youtu.be/phQpTdAGmig}.}, each with a radius, $R_i$, of $5$~m. The desired speed was fixed at $v_\mathrm{d} = 0.3$~m/s. The obstacles all move parallel to the inertial $y$-axis at a speed of $v_i = 0.5$~m/s, going back and forth between $\pm 50$~m (the speeds are changed at an acceleration of $0.1$~m/s$^2$ at the turning points). The vehicle must thus pass all four obstacles to reach the goal; the initial configuration of the obstacles is shown in the top-left of Figure~\ref{fig:sim_mol_NE}.  The vehicle can be seen to move towards the closest obstacle, triggering an increase of the speed, viewed in the top-left plot of Figure~\ref{fig:sim_mol_states}. This is attributed to the fact that the obstacle keeps a higher speed than the vehicle, and to satisfy~\eqref{eq:vcond} the vehicle must raise the speed correspondingly. From the top-right plot, we observe that the safety-critical controller~\eqref{eq:k1} is active as there is a temporary change of direction, during which the vehicle passes behind the obstacle. The same behaviour is displayed as the vehicle encounters more obstacles. Note that, since the barrier functions $h_{\psi,i}$ are not consistently defined, they have been omitted from the plots. Moreover, collision avoidance can be verified from the distances to each obstacle which are kept above the required minimums at all times. 

Next, eight obstacles of varying radii (ranging from $5$ to $10$~m) must be avoided, see Figure~\ref{fig:sim_moc_NE}. They are simulated to move synchronously in a clockwise circle about the point $[80, 0]^\top$~m, with speeds $v_i = 0.525$~m/s and  turning rates of $0.00875$~rad/s. The nominal vehicle speed was selected as $v_\mathrm{d} = 0.35$~m/s. From Figure~\ref{fig:sim_moc_states}, we observe that the speed controller raises the forward speed to an appropriate level as the vehicle gets closer to the obstacles. The speed maneuver is accompanied by a slight change of direction, enabling the vehicle to pass between two obstacles. The vehicle's speed returns to the nominal speed before a new maneuver is generated, taking the vehicle safely behind an obstacle, verified by the bottom-right plot of Figure~\ref{fig:sim_moc_states}. The vehicle reaches the target position shortly after. Importantly, safety of the vehicle is preserved in both cases without the necessity of stopping or reversing, while the generated inputs and the maximum vehicle speed stay within expected bounds.

\section{Conclusions and Future Work}\label{sec:Conclusion}

This paper proposed an obstacle avoidance strategy for vehicles with first-order nonholonomic constraints, moving in dynamic environments. We proposed to regulate the vehicle speed and orientation separately via two control barrier functions (CBFs), with the respective safety conditions derived from the velocity obstacle principle.  Unlike most other approaches defined within the CBF framework, the proposed strategy does not require the vehicle to brake or move backwards to avoid collisions. On the contrary, it enables the vehicle to maintain a nonzero forward speed by only adjusting the speed once it falls below the required level for avoiding an obstacle. This can benefit the vehicle's progression towards a goal and accommodates systems with operational requirements that prevent them from reversing.  Meanwhile, the steering controller diligently enforces the necessary turning maneuvers to avoid potential collisions. We gave theoretical assurance of safety under explicit conditions. The resulting performance was showcased through simulations of challenging scenarios with multiple moving obstacles, displaying adept and secure navigation through obstacle-filled environments.

In the future, we aim to implement the strategy in experiments on a marine vessel and explore more strategic utilization of the speed to further enhance the performance.

\bibliographystyle{IEEEtran}
\bibliography{references}
\end{document}